\documentclass[authoryear,preprint]{elsarticle}

\usepackage{amssymb}
\usepackage{amsmath}
\usepackage{amsthm}
\newtheorem{prop}{Proposition}
\newtheorem*{remark}{Remark}

\usepackage{bm} 
\usepackage{optidef} 
\usepackage{empheq}
\usepackage{algorithm}
\usepackage{algpseudocode}
\usepackage{algorithmicx}
\algnewcommand\algorithmicinput{\textbf{Step 1:}}
\algnewcommand\INPUT{\item[\algorithmicinput]}

\usepackage{booktabs} 
\usepackage{siunitx}
\usepackage{graphicx}
\usepackage{epstopdf}
\usepackage{subcaption}
\usepackage{xurl}
\usepackage{hyperref}          

\usepackage{moresize} 

\journal{journal (under review)}

\begin{document}

\begin{frontmatter}

\title{Dynamical System-Based Imitation Learning and Neuroadaptive Control for Trajectory Recovery in Autonomous Ships}

\author[a,b]{Yeyson A. Becerra-Mora\corref{cor1}}
\ead{ybecerra@us.es}
\author[a]{Jos\'e \'Angel Acosta}
\ead{jaar@us.es} 

\affiliation[a]{organization={Dept. Ingenieria de Sistemas y Automatica, University of Seville},
city={Sevilla},
postcode={41092},
country={Spain}}

\affiliation[b]{organization={Dept. of Electronic Engineering, CUN},
city={Bogota},
postcode={111711},
country={Colombia}}

\begin{abstract}
Repetitive maritime operations can be effectively learned using the Imitation Learning (IL) paradigm, which transfers human expertise directly to Unmanned Surface Vehicle (USV) control systems. Dynamical Systems (DS) are widely used to model non-linear human demonstrations while offering inherent stability guarantees. However, real-world execution under persistent marine perturbations reveals a critical trade-off: standard DS-based IL approaches prioritize global target convergence at the expense of localized trajectory reproduction fidelity. To address this limitation, we present a hybrid learning-control architecture that integrates a DS-based IL reference generator with a neuroadaptive controller. Our approach introduces a control action that drives the USV back to the demonstrated path following exogenous disturbances, enabling dynamic human-like reactive alignment—termed \emph{behavioral tracking}. The proposed methodology is validated using the Marine Systems Simulator (MSS) toolbox. Simulation results confirm that the framework generalizes complex maneuvering tasks while substantially improving trajectory tracking fidelity under disturbances compared to alternative control strategies.
\end{abstract}

\begin{keyword}
Imitation Learning, Neuroadaptive Control, Autonomous Ships
\end{keyword}

\end{frontmatter}

\section{Introduction}
\label{sec1}

Collaborative robots have emerged as an option to address the incremental interest in easing human-robot interaction. However, giving a robot a new skill demands many hours of explicit programming by trained personnel. This limitation can be overcome by allowing the robot to learn tasks instead of relying on conventional programming. This paradigm shift is the key to expanding the versatility and range of applications of robotic systems. The maritime industry can take advantage of these advances in robotics to produce autonomous navigation in ships by pilot indications. Some maritime tasks such as docking a ship become repetitive tasks for an experienced pilot; therefore, rather than doing the same tasks multiple times, the human can transfer these skills to the onboard ship computer in a natural way to reproduce complex maneuvers to accomplish the task.

One potential solution for this challenge is Imitation Learning (IL), which can effectively transfer human motion skills to robots. IL, also known as "Learning from Demonstration" or "Programming by Demonstration" \citep{Billard,Calinon09book,Gribovskaya2011}. This approach is highly valuable for tasks that are too complicated to program traditionally. IL is a three-step methodology —demonstration, learning, and reproduction— that allows a robot to acquire new motion skills implicitly by learning from a human-expert, rather than requiring complex programming for every new task. Demonstration data can be collected in multiple ways; kinesthetic  \citep{Kronander2014,Sakr2020}, teleoperation \citep{Havoutis2019,Zhang2018}, or passive observation \citep{Liu2018,Wang2022}. Supervised and unsupervised methods are employed to create a learning-based model from the collected data which represents robot kinematics. Once the model is trained, this creates the appropriate motion for the robot to reproduce the demonstrated task. Complex trajectories are the common skill to transfer; therefore, a set of demonstrated motions is learned, after which a generalization\footnote{Generalization is the ability to apply learned knowledge to new `unfamiliar' situations.} is retrieved for execution by the robot.

IL methodology is widely used to transfer human knowledge of complex, repetitive tasks into robotic and automated systems, such as digging or harvesting in the agricultural field \citep{Lauretti2023,KimChung2025}, needle manipulation and peg transfer in surgical procedures \citep{LiBin2022,Schwaner2021}, human-robot cooperation \citep{Koskinopoulou2016,Sasagawa2020}, and cutting vegetables or handling raw materials for food products \citep{Lioutikov2016,Misimi2018}. However, noisy environments such as the maritime industry can challenge the performance of repetitive tasks; therefore, disturbances must be considered in the motion planning for ships. An IL strategy is proposed in \cite{Piyabhum2023} for navigating an Unmanned Surface Vehicle (USV). More robust strategies have been developed for learning in \cite{BECERRAMORA2025} and for control in \cite{PEILONG}.

The motion of a system can be represented by Dynamical Systems (DS), which is defined in state-space and governed by a set of ordinary differential equations (ODEs). DS is a common method to model complex trajectories and to produce real-time motion from any starting point. Furthermore, the target point of the task can be encoded as a stable attractor by using DS. One of the most used time-dependent DS to learn from demonstrations and generate motions is Dynamic Movement Primitives (DMP) \citep{Ijspeert2013}; nevertheless, some drawbacks such as poor generalization and difficulty dealing with temporal perturbations are common. Therefore, an alternative to guarantee robustness against temporal perturbations is state-dependent DS, which can be modeled using IL methodology. 

A diversity of approaches in DS-based IL have been proposed to learn complex dynamics and ensure stability. In \cite{Khansari2011}, a quadratic Lyapunov function is used to constrain the parameters of the Gaussian Mixture Model (GMM). In \cite{Neumann2013}, a predefined Lyapunov function is used to guarantee asymptotic stability during the learning process via sampling inequality constraints. In \cite{Khansari2014}, the problem is divided into three steps: (i) a Control Lyapunov Function (CLF) is learned from a set of demonstrations; (ii) a supervised or unsupervised learning method is used to learn a non-linear trajectory; and (iii) a constrained optimization problem is solved to ensure stability. An enhanced formulation by \cite{BecAco24} unified the learning of the CLF and non-linear trajectory into a single noise-tolerant constrained optimization problem. This approach was later validated in practice using real ship maneuvering data from a port environment in \cite{BECERRAMORA2025}.
In \cite{Neumann2015}, a Lyapunov function is transformed from the original space to a different space to improve stability and accuracy via diffeomorphic transformations. In \cite{Santos2018}, a non-linear autoregressive polynomial model and a constrained least-square estimator are used to implement an IL methodology, local asymptotic stability is guaranteed. In \cite{Duan2019}, three factors (accuracy, stability and speed) are considered to propose an algorithm based on Extreme Learning Machine that learns from demonstrations. A neural-network-based DS is presented in \cite{Zhang2022} to learn complex motions from demonstrations. Consequently, (un)supervised learning methods are used to learn complex dynamics by means of constrained optimization problems with stability guarantees.

However, successfully reproducing a task requires imitating the underlying motion pattern and guaranteeing convergence to the target, safety- and precision-critical applications demand more. The system must not only recover from perturbations to reach the target, but also accurately track the intended nominal trajectory post-external disturbance. Examples include autonomous Unmanned Surface Vehicles (USVs) navigating narrow channels or robotic manipulators executing high-fidelity welding operations, both of which require adaptive mechanisms to handle exogenous disturbances.

In marine applications, persistent external disturbances—such as ocean waves, currents, and wind—continuously perturb a ship's trajectory, posing a major challenge for precise motion control. To address this issue, we build upon the DS-based IL framework introduced in \cite{BecAco24, BECERRAMORA2025}, in which a constrained optimization problem is solved to simultaneously learn Gaussian Mixture Model (GMM) parameters and an underlying Lyapunov function. Although this approach yields strong nominal performance and theoretical stability guarantees, real-world hardware deployments on Unmanned Surface Vehicles (USVs) reveal a critical limitation: trajectory reproduction fidelity during execution\footnote{Execution refers to hardware deployment, such as running the algorithm on a USV.} degrades significantly under large exogenous perturbations.

Standard DS-based IL methods prioritize global convergence to a target attractor over localized path reproduction under heavy disturbances. Consequently, while the robust controller in \cite{BecAco24} ensures ultimate convergence, it allows state trajectories to deviate substantially from the demonstrated paths before recovering.

To resolve this trade-off between global convergence and localized trajectory fidelity, we propose a novel hybrid learning-control architecture that combines a neuroadaptive controller \citep{Arabi2019} with a learned DS reference. Our main contribution is the introduction of an external control action that actively drives the ship back toward the localized demonstration following an exogenous disturbance, thereby enhancing trajectory recovery while strictly preserving closed-loop stability.

\begin{figure}[hbtp]  
    \centering
    \includegraphics[width=1\columnwidth]{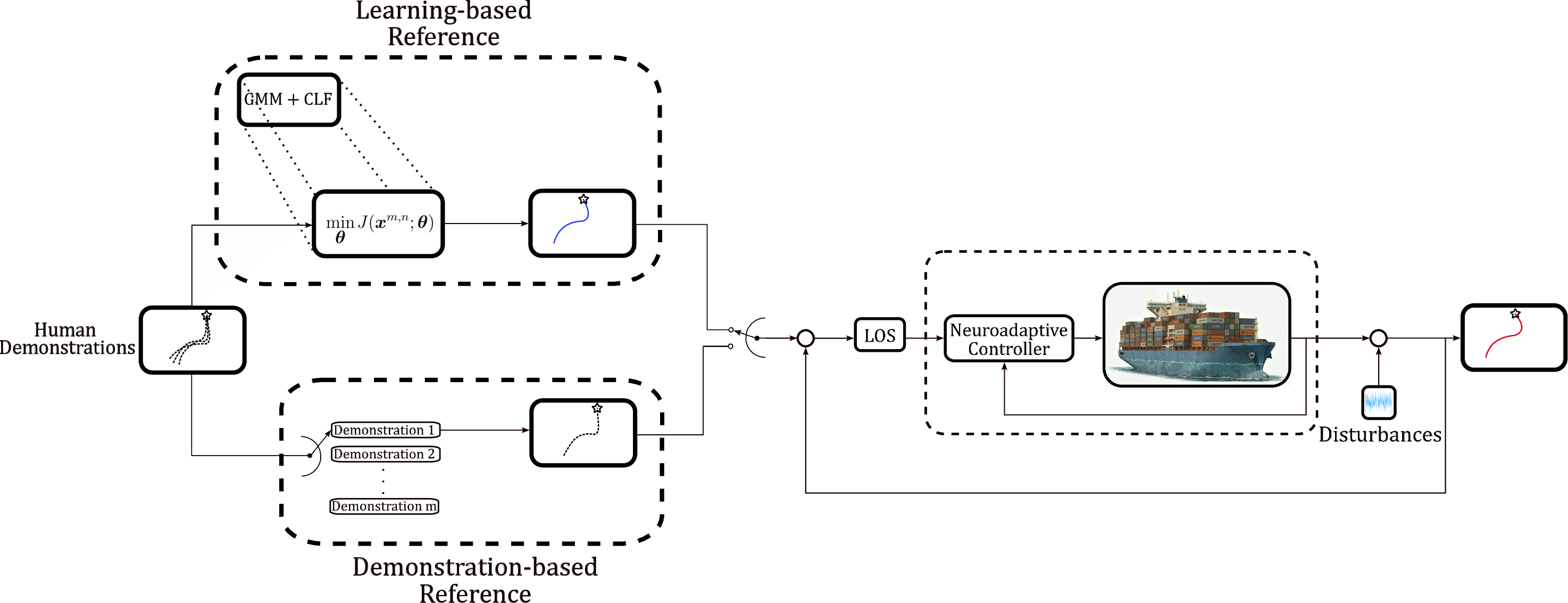}
    \caption{IL architectures comparison: top-left, learning references for human-like responsiveness; bottom-left, demonstration references for classical trajectory tracking control.} 
    \label{fig:BlockD}
\end{figure}

Fig. \ref{fig:BlockD} illustrates the proposed DS-based IL framework. While static references derived directly from human demonstrations lack responsiveness, a common limitation in classical control, learning-based dynamic references capture implicit human behavioral traits, such as adaptive compliance and reactivity. We term this dynamic alignment \emph{behavioral tracking} as it enables the autonomous system to emulate human-like corrective responses under unexpected perturbations. As depicted in Fig. \ref{fig:BlockD}, our framework learns these demonstrations to generate a non-linear target trajectory from any arbitrary geographical coordinate. A Line-of-Sight (LOS) guidance algorithm then processes this reference to generate dynamic heading commands for the neuroadaptive controller.

For the sake of completeness, we perform a comparison integrating different control strategies into the DS-based IL to validate their performance. 

The rest of this paper is structured as follows: Section 2 introduces the background of the learning approach. Section 3 introduces a robust-adaptive approach for high fidelity in imitation learning. In Section 4, the proposed approach is validated in the Marine System Simulator (MSS) toolbox \citep{MSS}. Finally, conclusions are drawn in Section 5.

\smallskip
\noindent \textbf{Notation.} Sets and collections are denoted using braces $\{\cdot\}$. For compactness, bold symbols denote groups of them. The operator $\nabla V(x)$ denotes the gradient of a scalar function $V(x)$ with respect to $x$. $\Vert \cdot \Vert$ stands for vector and induced matrix norms, and $\Vert{}\cdot\Vert{}_\Gamma$ for a weighted matrix norm with $\Gamma$ positive definite. Finally, a continuous and strictly increasing function $\kappa(s) \in \mathcal{K}_\infty$, $\kappa(0) = 0$, and $\lim_{|s| \to \infty} \kappa(|s|) = \infty$.

\section{Robust learning background}
\label{sec2}

Consider a dataset $\mathcal{D}:=\{x^{m,n}, \dot{x}^{m,n}\}_{m=1,n=1}^{M,N}$ with $M$ demonstrations of $N$ samples each, where $x^{m,n}$ represents a position data point and $\dot{x}^{m,n}$  a velocity data point of a platform. Moreover, demonstrations encode point-to-point motions that have the same target state $x^{*}$ for one task. In what follows, we assume that the dataset $\mathcal{D} \subset \mathcal{X}$ consists of points $x \in \mathcal{X}$, where $\mathcal{X} \subset \mathbb{R}^{d}$ is an open set. Hence, a demonstration can be modeled as an autonomous DS:

\begin{equation} \label{eq:sys}
\dot {x} = f(x) + u,
\end{equation}

\noindent where $f:\mathbb{R}^{d} \mapsto \mathbb{R}^{d}$ is a non-linear, continuous, and continuously differentiable function and $u$ the control action of appropriate dimensions; besides, it has a single equilibrium state which can be regarded as the asymptotically stable point attractor $x^{*} = f(x^{*}) = 0$, without loss of generality. (Un)supervised learning methods can be used to learn and reproduce the dynamics of \eqref{eq:sys}. Therefore, the estimated system is represented by

\begin{equation} \label{eq:Esys}
\dot {\hat{x}}=\hat{f}(x; \bm{\theta}) + \hat{u}(x; \bm{\theta}) + \eta(t,x),
\end{equation}

\noindent  where $\hat{f}: \mathbb{R}^{d} \to \mathbb{R}^{d}$ represents the estimated non-linear system dynamics, $\hat{u}(x; \bm{\theta})$ is the estimated control input and $\bm{\theta}$ denotes the parameter vector to be learned. The term $\eta$ stands for a bounded additive disturbance that covers various sources of error, such as measurement inaccuracies, noise-induced imperfections in demonstrations, or external perturbations coming from the environment. The disturbances $\eta$ can negatively affect the learning process, but reproductions can be even more affected; for this reason, the estimated input $\hat u$ is necessary to make corrections within the learning process.

It is worth emphasizing that a first-principles model of the non-linear dynamics is assumed to be unavailable. Consequently, a learning process is required to estimate both the DS and the associated control signal from demonstration data.
On the one hand, an unsupervised learning method such as GMM learns the complex dynamics of the system; on the other hand, a CLF ensures the so-called asymptotic stability of the data-driven dynamics. Therefore, the optimal parameters $\bm{\theta}$ must be learned not only to imitate the demonstrated trajectories, but also to reject disturbances. The remainder of this section briefly summarizes our learning approach, following the core methodology developed in \cite{BecAco24,BECERRAMORA2025}.

%

GMM is a popular method for density approximation and clustering. The $K$ Gaussian kernels are fitted to the dataset $\mathcal{D}$ which are fully described by priors $\pi_k$, means $\mu_k$ and covariance matrices $\Sigma_k$. The mean and covariance matrices are denoted as

\begin{equation}
    \mu_k = 
    \begin{bmatrix}
        \mu_{k}^{x}
        \\
        \mu_{k}^{\dot{x}}
    \end{bmatrix}\textrm{,} \quad
    \Sigma_k = 
    \begin{bmatrix}
        \Sigma_{k}^{x} & \Sigma_{k}^{x\dot{x}}
        \\
        \Sigma_{k}^{\dot{x}x} & \Sigma_{k}^{\dot{x}}
   \end{bmatrix}.
\end{equation}

The probability that each data point $\{x^{m,n},\dot{x}^{m,n}\}$ belongs to a given Gaussian kernel $k$ is defined as the mixture of Gaussian distributions as

\begin{equation}
\mathcal{P}(x^{m,n},\dot{x}^{m,n}; {\bm \theta_k}) = \sum_{k=1}^{K}\pi_k \mathcal{P}({x^{m,n},\dot{x}^{m,n}} | k)
\
    \begin{dcases}
        k \in 1,\ldots,K;
        \\
        n \in 1,\ldots,N;
        \\
        m \in 1,\ldots,M,
    \end{dcases}
\end{equation}

\noindent where $\bm{\theta_k}=\{\pi_k,\mu_{k},\Sigma_{k}\}$ are the GMM parameters, where $\pi_k$ quantifies the Gaussian contribution to the data point, and $\mathcal{P}(x^{m,n},\dot{x}^{m,n}|k)$ is the conditional probability density function calculated with the normal density function for the $k^{th}$ component $\mathcal{N}(x^{m,n},\dot{x}^{m,n}; \mu_{k}, \Sigma_{k})$. The process of optimizing the GMM parameters $\bm{\theta_k}$ is performed by the Expectation-Maximization (EM) algorithm \citep{Bishop}.
Thus, having an optimal $\bm{\theta_k}$ and using Gaussian Mixture Regression (GMR) \citep{Cohn1996}, an estimate of velocity $\hat{f}(x)$ is calculated given the position $x$ as input. This can be computed as


\begin{equation} \label{eq:fest}
    \hat{f}(x; \bm{\theta}) = {\sum_{k=1}^{K} \gamma_k(x) \left(\mu_{k}^{\dot{x}}+\Sigma_{k}^{\dot{x}x}(\Sigma_k^{x})^{-1}({x}-\mu_{k}^{x})\right)},
\end{equation}

\noindent where $\gamma_k(x)$ is a non-linear weighting term to measure the influence of every Gaussian kernel, and the super-indexes are omitted for clarity. A more detailed description of GMM and GMR is presented in \cite{BECERRAMORA2025}. The greater the number of Gaussian components $K$, the better the estimated DS. 

%
%
However, (un)supervised learning methods are prone to instability under noise and perturbations, which can lead the estimated system toward spurious attractors or divergence. To overcome this, non-linear control strategies can be integrated directly into the optimization loop, thereby guaranteeing the stability of the learned estimate as presented in \cite{BecAco24}. In essence, the approach relies on Lyapunov stability theory, which establishes that an autonomous dynamical system is globally asymptotically stable at $x^*$ if there exists a continuously differentiable function $V: \mathbb{R}^d \to \mathbb{R}$ that satisfies three fundamental conditions: positive definiteness, negative definiteness of its time derivative, and $V(x^*) = 0$.
Multiple options to obtain a stable DS refer to utilize learning methods and fulfill the Lyapunov conditions. However, a way to avoid an unstable estimate is to learn the convenient Control Lyapunov Function\footnote{CLFs are essentially a natural generalization of Lyapunov functions for systems with control inputs, see e.g. \cite{Acosta}.} (CLF) and GMM parameters from a set of demonstrations in a single step (see \cite{BecAco24}) for more details).
%
%
The control signal is obtained by analytically solving a constrained optimization problem and using a simplified version of the original Sontag’s universal formula \citep{Sontag} as follows:

\begin{equation}\label{eq:uopt}    
    \hat u(x; \bm{\theta})=   
        -\left(\nabla_x V(x)^{\top} \hat{f}(x; \bm{\theta}) + \rho(|x|)\right) \frac{\nabla_x V(x)}{\Vert \nabla_x V(x)\Vert_2^2},           
\end{equation}

\noindent if $\nabla_x V(x)^{\top} \hat{f}({x}) > -\rho(|x|)$, and 0 otherwise; $V(x)$ represents the actual CLF, and
%
$\rho(|x|) := \rho_0 \big((\nabla_x V(x)^{\top} \hat{f}(x))^{2} + \Vert \nabla_x V(x)^{\top} \Vert_2^4\big)^{1/2}$ with $\rho_0>0$. A suitable CLF to reproduce a broad spectrum of complex DS is the Weighted Sum of $L$ Asymmetric Quadratic Functions (WSAQF) \citep{BecAco24,Khansari2014}. This energy function ensures a unique global minimum at the target point by requiring a set of matrices $P_{l}$ to be positive definite; additionally, the asymmetric shape of this function is influenced by the vectors $\mu_{l}$, with $l=1,...,L$. 


Finally, the optimal parameters to reproduce the DS and ensure its stability are learned using a constrained optimization problem. Merging the GMM and CLF parameters in the collection $\bm \theta := \{\bm \theta_k, P_{l}, \mu_{l}\}$, the optimal parameters $\bm{\theta}$ are learned by minimizing the objective function
\begin{mini}|s|
{\bm \theta}{J({\bm x}^{m,n}; \bm \theta) :=\frac{1}{2 M N} \sum_{m=1}^{M} \sum_{n=1}^{N} | \dot{x}^{m,n}-\dot{\hat x}^{m,n} |^{2}.}
{\label{eq:J}}{}
\end{mini}

This function attempts to reduce the error between the real velocity $\dot{x}$ and its estimate $\dot{\hat{x}}$ while preserving the stability in the DS \citep{BecAco24}. Note that the number of Gaussian kernels $K$ and the number of positive definite matrices $L$ affect the computational time. Therefore, the greater the complexity of the dynamical system, the higher the computational time.    

\begin{remark}
Although the offline learning phase accounts for noise in the demonstration data, an external control action remains necessary to guarantee closed-loop stability during online execution. This requirement motivates Section~\ref{sec:control}, which details the primary contribution of this work.
\end{remark}

\section{Nonlinear Adaptive Control for online tracking execution} \label{sec:control}

Adaptive and neuro-adaptive controllers can achieve high-performance 
system behavior without relying excessively on precise mathematical models. 
Both control strategies effectively mitigate various operational challenges, 
including external noise, system failures, time-varying dynamics, and modeling 
inaccuracies. In this work, we utilize the model reference neuro-adaptive 
control framework proposed in \cite{Arabi2019}. This approach extends 
classical MRAC by closely approximating system uncertainties over a predefined 
compact set, leveraging the universal function approximation property of neural 
networks (NNs). The Model Reference Neuro-adaptive architecture developed in \cite{Arabi2019} successfully ensures the ultimate boundedness of the closed-loop system signals.

In this section, we establish the stability and robustness properties of the 
proposed learning and control framework. For the sake of completeness, we 
provide the rigorous development adapted to our specific formulation following 
the foundational framework in \cite{Arabi2019}. However, we highlight the 
novelty introduced by our integrated learning mechanism, which embeds 
behavioral capabilities into the closed-loop adaptive system. 

Let ${x}_{\scalebox{0.5}{\textit{N}}}$ denote the measurable state vector, living in the compact set $\mathcal{X}_{\scalebox{0.5}{\textit{N}}}\subset\mathcal{X}$, and consider the structure of the  nonlinear system \eqref{eq:Esys} as
\begin{equation} \label{eq:sysA}
\dot {x}_{\scalebox{0.5}{\textit{N}}} = A_{\scalebox{0.5}{\textit{N}}} {x}_{\scalebox{0.5}{\textit{N}}} + B_{\scalebox{0.5}{\textit{N}}} \Lambda \left(\delta(x_{\scalebox{0.5}{\textit{N}}}) + u_{\scalebox{0.5}{\textit{N}}}\right),
\end{equation}

\noindent where $\Lambda$ is the control effectiveness uncertainty, $u_{\scalebox{0.5}{\textit{N}}}$ is the control input, and $\delta(x_{\scalebox{0.5}{\textit{N}}})$ encapsulates the unstructured uncertainty and the approximation error. In this study, ${x}_{\scalebox{0.5}{\textit{N}}}$ and $\dot{x}_{\scalebox{0.5}{\textit{N}}}$ represent the position and velocity of the ship, respectively. The vector $\delta(x_{\scalebox{0.5}{\textit{N}}})$ accounts for the model uncertainties of the ship, the environmental disturbances and the approximation error, such that for $x_{\scalebox{0.5}{\textit{N}}} \in \mathcal{K}$, the whole unstructured uncertainty can be approximated by 
\begin{equation} \label{eq:uncert}
\delta(x_{\scalebox{0.5}{\textit{N}}}) = W_{\scalebox{0.5}{\textit{N}}_{\textit{0}}}^{\top} \Theta_{\textit{0}}(x_{\scalebox{0.5}{\textit{N}}})  + \varepsilon_{\scalebox{0.5}{\textit{N}}} (x_{\scalebox{0.5}{\textit{N}}}),
\end{equation}

\noindent where $W_{\scalebox{0.5}{\textit{N}}_{\textit{0}}}$ is an unknown weighting constant matrix, $\Theta_{\textit{0}}$ are radial basis functions (RBF), and $\varepsilon_{\scalebox{0.5}{\textit{N}}}$ is the approximation error. For the system defined in \eqref{eq:sysA}, the feedback control law is defined as follows

\begin{equation} \label{eq:uN}
u_{\scalebox{0.5}{\textit{N}}} = \underbrace{ -\alpha_1 x_{\scalebox{0.5}{\textit{N}}} + \alpha_2 \hat {x}(t)}_{=:u_n(x_{\scalebox{0.5}{\textit{N}}}, \hat x)}  -\underbrace{\hat{W}_{\scalebox{0.5}{\textit{N}}}^{\top}\Theta(x_{\scalebox{0.5}{\textit{N}}})}_{=:- u_a(x_{\scalebox{0.5}{\textit{N}}},\hat{W}_{\scalebox{0.5}{\textit{N}}})},
\end{equation}

\noindent where $u_n$ and $u_a$ are the nominal and adaptive control laws, respectively. Furthermore, $\alpha_1$ and $\alpha_2$ are the nominal gains, $\hat {x}$ represents the reference system, and  $\hat{W}_{\scalebox{0.5}{\textit{N}}}^{\top} \Theta(x_{\scalebox{0.5}{\textit{N}}})$ is the estimate function, to be defined further. 

On the other hand, a reference model is crucial in this MRAC-based control architecture, which encapsulates the desired closed-loop performance. Let $x_{\scalebox{0.5}{\textit{N}}_{\textit{r}}}$ denote the state of that reference model which is similar to \eqref{eq:sysA}, this is defined as
\begin{equation} \label{eq:sysAr}
\dot {x}_{\scalebox{0.5}{\textit{N}}_{\textit{r}}} 
:= A_{\scalebox{0.5}{\textit{N}}_{\textit{r}}} x_{\scalebox{0.5}{\textit{N}}_{\textit{r}}} + B_{\scalebox{0.5}{\textit{N}}_{\textit{r}}} \hat {x}(t), 
\end{equation}

\noindent where $\hat {x}$ represents the learned reference to be followed. It is worth noting that the presence of uncertainties, as well as the adaptive control law, are ignored in \eqref{eq:sysAr}. Importantly, $\hat {x}$ is the main distinct feature when compared with the standard MRAC architecture. In this work, we learn $\hat {x}$, whereas in the case of Model Reference Neuro-adaptive/MRAC regulators it represents a predefined reference.
The reference model matrices are set by the standard MRAC matching conditions, which are derived equating the steady-state response of the certain part (so-called nominal) closed-loop system from \eqref{eq:sysA} and $u_{n}$ from \eqref{eq:uN}, and hence they are set as follows
\begin{equation} \label{eq:sysArdef}
\dot {x}_{\scalebox{0.5}{\textit{N}}}^{ss} 
= A_{\scalebox{0.5}{\textit{N}}} x_{\scalebox{0.5}{\textit{N}}}^{ss}  + B_{\scalebox{0.5}{\textit{N}}} u_n^{ss} 
\equiv A_{\scalebox{0.5}{\textit{N}}_{\textit{r}}} x_{\scalebox{0.5}{\textit{N}}_{\textit{r}}}^{ss} + B_{\scalebox{0.5}{\textit{N}}_{\textit{r}}} \hat {x}^{ss}  = \dot {x}_{\scalebox{0.5}{\textit{N}}_{\textit{r}}}^{ss}, 
\end{equation}

\noindent where the superscript $^{ss}$ stands for steady state. These impose the two algebraic conditions $A_{\scalebox{0.5}{\textit{N}}_{\textit{r}}} \triangleq A_{\scalebox{0.5}{\textit{N}}} - B_{\scalebox{0.5}{\textit{N}}} \alpha_1$ and $B_{\scalebox{0.5}{\textit{N}}_{\textit{r}}} \triangleq B_{\scalebox{0.5}{\textit{N}}} \alpha_2$; and an additional stability requirement of  $A_{\scalebox{0.5}{\textit{N}}_{\textit{r}}}$ is a Hurwitz matrix, which restricts the feasible set of the control gain $\alpha_{1}$ of \eqref{eq:uN}.

The neuroadaptive approach attempts to minimize the difference between the measurable state and the reference-model state, that is reformulated through the error $e := x_{\scalebox{0.5}{\textit{N}}}-x_{\scalebox{0.5}{\textit{N}}_{\textit{r}}}$; hence, the error dynamics can be obtained from \eqref{eq:sysA}, \eqref{eq:uncert}, \eqref{eq:uN} and \eqref{eq:sysAr} as follows
\begin{equation} 
\begin{split}
\dot {e} = & A_{\scalebox{0.5}{\textit{N}}} x_{\scalebox{0.5}{\textit{N}}} + B_{\scalebox{0.5}{\textit{N}}} \Lambda \left(W_{\scalebox{0.5}{\textit{N}}_{\textit{0}}}^{\top} \Theta_{\textit{0}} + \varepsilon_{\scalebox{0.5}{\textit{N}}}  + u_{\scalebox{0.5}{\textit{N}}} \right) - \left( A_{\scalebox{0.5}{\textit{N}}_{\textit{r}}} x_{\scalebox{0.5}{\textit{N}}_{\textit{r}}} + B_{\scalebox{0.5}{\textit{N}}_{\textit{r}}} \hat {x}(t) \right)\\
= &  A_{\scalebox{0.5}{\textit{N}}} x_{\scalebox{0.5}{\textit{N}}} + B_{\scalebox{0.5}{\textit{N}}} u_n + B_{\scalebox{0.5}{\textit{N}}} \Lambda (I-\Lambda^{-1}) u_n - \left( A_{\scalebox{0.5}{\textit{N}}_{\textit{r}}} x_{\scalebox{0.5}{\textit{N}}_{\textit{r}}} + B_{\scalebox{0.5}{\textit{N}}_{\textit{r}}} \hat {x}(t) \right) \\
& + B_{\scalebox{0.5}{\textit{N}}} \Lambda \left(W_{\scalebox{0.5}{\textit{N}}_{\textit{0}}}^{\top} \Theta_{\textit{0}} + \varepsilon_{\scalebox{0.5}{\textit{N}}} + u_a \right) \\
= &  (A_{\scalebox{0.5}{\textit{N}}} - B_{\scalebox{0.5}{\textit{N}}} \alpha_1) x_{\scalebox{0.5}{\textit{N}}} + B_{\scalebox{0.5}{\textit{N}}} \alpha_2 \hat {x}(t) 
- \left( A_{\scalebox{0.5}{\textit{N}}_{\textit{r}}} x_{\scalebox{0.5}{\textit{N}}_{\textit{r}}} + B_{\scalebox{0.5}{\textit{N}}_{\textit{r}}} \hat {x}(t) \right)\\
& + B_{\scalebox{0.5}{\textit{N}}} \Lambda \left( (I-\Lambda^{-1}) u_n + W_{\scalebox{0.5}{\textit{N}}_{\textit{0}}}^{\top} \Theta_{\textit{0}} + \varepsilon_{\scalebox{0.5}{\textit{N}}}  + u_a \right)  \\
= & A_{\scalebox{0.5}{\textit{N}}_{\textit{r}}} e + B_{\scalebox{0.5}{\textit{N}}} \Lambda \left(W_{\scalebox{0.5}{\textit{N}}}^{\top} \Theta + \varepsilon_{\scalebox{0.5}{\textit{N}}}  + u_a \right) \\
= & A_{\scalebox{0.5}{\textit{N}}_{\textit{r}}} e + B_{\scalebox{0.5}{\textit{N}}} \Lambda \left(W_{\scalebox{0.5}{\textit{N}}}^{\top} \Theta + \varepsilon_{\scalebox{0.5}{\textit{N}}}  - \hat{W}_{\scalebox{0.5}{\textit{N}}}^{\top}\Theta \right) \\
= & A_{ \scalebox{0.5} {\textit{N}}_{\textit{r}} } e - B_{\scalebox{0.5}{\textit{N}}} \Lambda \Tilde{W}_{\scalebox{0.5}{\textit{N}}}^{\top} \Theta(x_{\scalebox{0.5}{\textit{N}}}) + B_{\scalebox{0.5}{\textit{N}}} \Lambda \varepsilon_{\scalebox{0.5}{\textit{N}}} (x_{\scalebox{0.5}{\textit{N}}}),
\end{split}
\label{eq:error}
\end{equation}
\noindent where $\Tilde{W}_{\scalebox{0.5}{\textit{N}}} \triangleq \hat{W}_{\scalebox{0.5}{\textit{N}}} - W_{\scalebox{0.5}{\textit{N}}}$ is the weight estimate error, and we have defined ${W}_{\scalebox{0.5}{\textit{N}}} := [ W_{\scalebox{0.5}{\textit{N}}_{\textit{0}}} \ \vdots \  (I - \Lambda^{-1})^{\top} ] $ and $\Theta(x_{\scalebox{0.5}{\textit{N}}}, \hat x) := [\Theta_{\textit{0}} \ \vdots \ u_n ]$ as the stacked unknown weight matrix and the so-called regressor, respectively. 
Recall that $\dot{\Tilde{W}}_{\scalebox{0.5}{\textit{N}}} = \dot{\hat{W}}_{\scalebox{0.5}{\textit{N}}}$, under the fairly standard assumption that $W_{\scalebox{0.5}{\textit{N}}}$ constant.
Finally, to complete the controller design, the adaptive parameter update law is given by
\begin{equation} \label{eq:uncert_dyn}
\dot{\hat{W}}_{\scalebox{0.5}{\textit{N}}} = \gamma \left( \Theta(x_{\scalebox{0.5}{\textit{N}}})e^{\top} P_{\scalebox{0.5}{\textit{N}}} B_{\scalebox{0.5}{\textit{N}}} - \sigma_{\scalebox{0.5}{\textit{N}}} \hat{W}_{\scalebox{0.5}{\textit{N}}} \right),
\end{equation}

\noindent where $\gamma$ is the learning rate, and the term $\sigma_{\scalebox{0.5}{\textit{N}}}$ is the leakage modification to prevent the estimates from growing unbounded due to drift-causing effects. The positive definite matrix $P_{\scalebox{0.5}{\textit{N}}}$ is the solution of the Lyapunov equation $ A_{\scalebox{0.5}{\textit{N}}_{\textit{r}}}^{\top} P_{\scalebox{0.5}{\textit{N}}} + P_{\scalebox{0.5}{\textit{N}}} A_{\scalebox{0.5}{\textit{N}}_{\textit{r}}} = - R_{\scalebox{0.5}{\textit{N}}}$, for some positive definite matrix $R_{\scalebox{0.5}{\textit{N}}}$. Recall that, a solution always exits, if and only if the matrix $A_{\scalebox{0.5}{\textit{N}}_{\textit{r}}}$ is Hurwitz.


The stability guarantees are established through a Lyapunov analysis. Thus, let define the following positive definite and radially unbounded error function
\begin{equation} \label{eq:Lya}
V(e,\Tilde{W}_{\scalebox{0.5}{\textit{N}}}) := e^{\top} P_{\scalebox{0.5}{\textit{N}}} e + \frac{1}{\gamma} \text{tr}\left[ (\Tilde{W}_{\scalebox{0.5}{\textit{N}}} \Lambda^{1/2})^{\top} (\Tilde{W}_{\scalebox{0.5}{\textit{N}}} \Lambda^{1/2}) \right],
\end{equation}
where $\text{tr}[\cdot]$ stands for the matrix \emph{trace} operator. The time derivative of \eqref{eq:Lya} along the error dynamics \eqref{eq:error} yield
\begin{equation} \label{eq:DLya}
\begin{split}
\dot{V} 
=& - e^{\top} R_{\scalebox{0.5}{\textit{N}}} e + 2 e^{\top} P_{\scalebox{0.5}{\textit{N}}} B_{\scalebox{0.5}{\textit{N}}} \Lambda \left(W_{\scalebox{0.5}{\textit{N}}}^{\top} \Theta + \varepsilon_{\scalebox{0.5}{\textit{N}}}  - \hat{W}_{\scalebox{0.5}{\textit{N}}}^{\top}\Theta \right) \\
&+ \frac{2}{\gamma} \text{tr} \left[ \Lambda^{1/2} \Tilde{W}_{\scalebox{0.5}{\textit{N}}}^{\top} \gamma \left( \Theta e^{\top} P_{\scalebox{0.5}{\textit{N}}} B_{\scalebox{0.5}{\textit{N}}} - \sigma_{\scalebox{0.5}{\textit{N}}} \hat{W}_{\scalebox{0.5}{\textit{N}}} \right) \Lambda^{1/2} \right]  \\
=&  - e^{\top} R_{\scalebox{0.5}{\textit{N}}} e + 2 e^{\top} P_{\scalebox{0.5}{\textit{N}}} B_{\scalebox{0.5}{\textit{N}}} \Lambda \left( \varepsilon_{\scalebox{0.5}{\textit{N}}} - \tilde{W}_{\scalebox{0.5}{\textit{N}}}^{\top}\Theta \right) \\
&+ 2 \ \text{tr} \left[ \Lambda^{1/2} \Tilde{W}_{\scalebox{0.5}{\textit{N}}}^{\top}\Theta e^{\top} P_{\scalebox{0.5}{\textit{N}}} B_{\scalebox{0.5}{\textit{N}}} \Lambda^{1/2} 
-  \frac{\sigma_{\scalebox{0.5}{\textit{N}}} }{\gamma} \Lambda^{1/2} \Tilde{W}_{\scalebox{0.5}{\textit{N}}}^{\top} (\tilde{W}_{\scalebox{0.5}{\textit{N}}}+{W}_{\scalebox{0.5}{\textit{N}}}) \Lambda^{1/2} \right] \\
=&  - e^{\top} R_{\scalebox{0.5}{\textit{N}}} e + 2 e^{\top} P_{\scalebox{0.5}{\textit{N}}} B_{\scalebox{0.5}{\textit{N}}} \Lambda \varepsilon_{\scalebox{0.5}{\textit{N}}}
- 2 \frac{\sigma_{\scalebox{0.5}{\textit{N}}} }{\gamma} \text{tr} \left[\|\tilde{W}_{\scalebox{0.5}{\textit{N}}} \|_{\Lambda}^{2} + {W}_{\scalebox{0.5}{\textit{N}}} \Lambda \Tilde{W}_{\scalebox{0.5}{\textit{N}}}^{\top} \right] \\
\leq &  - e^{\top} R_{\scalebox{0.5}{\textit{N}}} e + 2 \|e\|_{2} \|P_{\scalebox{0.5}{\textit{N}}} B_{\scalebox{0.5}{\textit{N}}} \Lambda\|_{2} \ \bar \varepsilon
- 2 \frac{\sigma_{\scalebox{0.5}{\textit{N}}} }{\gamma} \text{tr} \left[\|\tilde{W}_{\scalebox{0.5}{\textit{N}}} \|_{\Lambda}^{2} + \|{W}_{\scalebox{0.5}{\textit{N}}}\|  \|\Tilde{W}_{\scalebox{0.5}{\textit{N}}}\|_{\Lambda} \right] \\
\leq &  - c_1 \|e\|_{2}^{2} - c_2 \text{tr} \left[\|\tilde{W}_{\scalebox{0.5}{\textit{N}}} \|_{\Lambda}^{2} \right] + \frac{\bar \varepsilon^{2}}{c_1} \|P_{\scalebox{0.5}{\textit{N}}} B_{\scalebox{0.5}{\textit{N}}} \Lambda\|_{2}^{2}
+ \frac{\sigma_{\scalebox{0.5}{\textit{N}}} }{\gamma c_2} \text{tr} \left[\|{W}_{\scalebox{0.5}{\textit{N}}} \|_{\Lambda}^{2} \right],
\end{split}
\end{equation}
where $\bar{\varepsilon}$ is an upper bound of the approximation error defined further,   we have used the properties of the \emph{trace} operator, and the last bound comes from the application of the Young's inequality, with positive constants $c_1$ and $c_2$ that always exist by construction.
\begin{remark}
As a side note, reducing the magnitude of the approximation bound 
$\bar{\varepsilon}$ (e.g., by increasing the number of neurons) and selecting 
a small leakage term coefficient both diminish the effect of the positive terms. 
Therefore, the neural network approximation must hold tightly over $\mathcal{X}$, i.e. $\mathcal{X}_{\scalebox{0.5}{\textit{N}}}\simeq \mathcal{X}$, to minimize the approximation error, implying a trade-off between the convergence rate and the parameter drift rate. 
\end{remark}
The following proposition presents the main stability result, highlighting 
the key difference compared to the standard Model Reference Neuro-adaptive approach, 
and outlines the design guidelines for the controller parameters.

\begin{prop}
Consider the uncertain system dynamics \eqref{eq:sysA}-\eqref{eq:uncert} together with the reference dynamics \eqref{eq:sysAr} and the adaptive state feedback \eqref{eq:uN}-\eqref{eq:uncert_dyn}, with the control gains subject to the following restrictions
\begin{align*}
\alpha_1 &: A_{\scalebox{0.5}{\textit{N}}_{\textit{r}}} = A_{\scalebox{0.5}{\textit{N}}} - B_{\scalebox{0.5}{\textit{N}}} \alpha_1, & \alpha_1 & \ \textnormal{constant and} \ A_{\scalebox{0.5}{\textit{N}}_{\textit{r}}} \ \textnormal{Hurwitz},\\
\alpha_2 &: B_{\scalebox{0.5}{\textit{N}}_{\textit{r}}} = B_{\scalebox{0.5}{\textit{N}}} \alpha_2, & \alpha_2 & \ \textnormal{constant},\\
 P_{\scalebox{0.5}{\textit{N}}} &: A_{\scalebox{0.5}{\textit{N}}_{\textit{r}}}^{\top} P_{\scalebox{0.5}{\textit{N}}} + P_{\scalebox{0.5}{\textit{N}}} A_{\scalebox{0.5}{\textit{N}}_{\textit{r}}} = - R_{\scalebox{0.5}{\textit{N}}}, & P_{\scalebox{0.5}{\textit{N}}}, R_{\scalebox{0.5}{\textit{N}}} & \ \textnormal{constant positive definite}, \\
 \gamma, \sigma_{\scalebox{0.5}{\textit{N}}} &: \ \textnormal{positive constants}.
\end{align*}
Under the assumption of $\varepsilon_{\scalebox{0.5}{\textit{N}}}, {W}_{\scalebox{0.5}{\textit{N}}} \in \mathcal{L_{\infty}}$, with $\|\varepsilon_{\scalebox{0.5}{\textit{N}}} (x_{\scalebox{0.5}{\textit{N}}})\|_2 \leqslant \Bar{\varepsilon}$, $ x_{\scalebox{0.5}{\textit{N}}} \in \mathcal{X}_{\scalebox{0.5}{\textit{N}}}$, the errors $e$ and $\tilde{W}_{\scalebox{0.5}{\textit{N}}}$ are uniformly ultimately bounded, with an estimate of the ultimate bound given by $d/\min(c_1,c_2)$,
$d :=\left(\bar \varepsilon^{2} \|P_{\scalebox{0.5}{\textit{N}}} B_{\scalebox{0.5}{\textit{N}}} \Lambda\|_{2}^{2} + (\sigma_{\scalebox{0.5}{\textit{N}}}/\gamma) \text{tr} \left[\|{W}_{\scalebox{0.5}{\textit{N}}} \|_{\Lambda}^{2} \right]\right)$.
\end{prop}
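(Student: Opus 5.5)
The proof is a standard Lyapunov ultimate-boundedness argument built on the candidate function \eqref{eq:Lya} and the bound on its derivative in \eqref{eq:DLya}. The error dynamics \eqref{eq:error} are written with the stacked unknown weights $W_{\scalebox{0.5}{\textit{N}}}$ and regressor $\Theta$. Matching is exact thanks to the restrictions on $\alpha_1,\alpha_2$, so no residual term in $\hat x$ remains and $\hat x$ enters only through the regressor $u_n$. The first step is therefore to verify that the learned reference $\hat x(t)$ is bounded. This follows from the learning stage of Section~\ref{sec2}: the CLF-based correction \eqref{eq:uopt} makes $\hat x$ converge to $x^*$, so $\hat x\in\mathcal{L}_\infty$. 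Since $A_{\scalebox{0.5}{\textit{N}}_{\textit{r}}}$ is Hurwitz, $x_{\scalebox{0.5}{\textit{N}}_{\textit{r}}}$ is also bounded. Consequently, while $x_{\scalebox{0.5}{\textit{N}}}\in\mathcal{X}_{\scalebox{0.5}{\textit{N}}}$ the regressor $\Theta(x_{\scalebox{0.5}{\textit{N}}},\hat x)$ is bounded and the approximation \eqref{eq:uncert} holds with $\|\varepsilon_{\scalebox{0.5}{\textit{N}}}\|_2\le\bar\varepsilon$.

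Next, I would redo \eqref{eq:DLya} carefully. The cross term $-2e^{\top}P_{\scalebox{0.5}{\textit{N}}}B_{\scalebox{0.5}{\textit{N}}}\Lambda\tilde W_{\scalebox{0.5}{\textit{N}}}^{\top}\Theta$ cancels against the trace term produced by the update law \eqref{eq:uncert_dyn}. This uses the cyclic property of the trace together with symmetry of $\Lambda$, taken diagonal positive definite. After the cancellation, only $-e^{\top}R_{\scalebox{0.5}{\textit{N}}}e$, the $\varepsilon$ term, and the leakage terms remain. Young's inequality is then applied twice:
\[
2\|e\|\,\|P_{\scalebox{0.5}{\textit{N}}}B_{\scalebox{0.5}{\textit{N}}}\Lambda\|\bar\varepsilon\le c_1\|e\|^2+\tfrac{\bar\varepsilon^2}{c_1}\|P_{\scalebox{0.5}{\textit{N}}}B_{\scalebox{0.5}{\textit{N}}}\Lambda\|^2,
\]
with $2\lambda_{\min}(R_{\scalebox{0.5}{\textit{N}}})\ge 2c_1$; similarly, the term $-2\frac{\sigma_{\scalebox{0.5}{\textit{N}}}}{\gamma}\mathrm{tr}[W_{\scalebox{0.5}{\textit{N}}}\Lambda\tilde W_{\scalebox{0.5}{\textit{N}}}^{\top}]$ is split so that half of $\|\tilde W_{\scalebox{0.5}{\textit{N}}}\|_\Lambda^2$ is kept. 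This yields
\[
\dot V\le -c_1\|e\|^2-c_2\|\tilde W_{\scalebox{0.5}{\textit{N}}}\|_\Lambda^2+d',
\]
where $d'$ is exactly the constant $d$ in the statement up to the scaling by $c_1,c_2$. I would fix $c_1,c_2$ explicitly, for instance $c_1=\lambda_{\min}(R_{\scalebox{0.5}{\textit{N}}})/2$ and $c_2=\sigma_{\scalebox{0.5}{\textit{N}}}/\gamma$, so that the constants match the statement.

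From this inequality, $\dot V<0$ whenever $c_1\|e\|^2+c_2\|\tilde W_{\scalebox{0.5}{\textit{N}}}\|^2_\Lambda>d$. In other words, $\dot V<0$ outside the compact set $\Omega:=\{(e,\tilde W_{\scalebox{0.5}{\textit{N}}}) : \|e\|^2+\|\tilde W_{\scalebox{0.5}{\textit{N}}}\|_\Lambda^2\le d/\min(c_1,c_2)\}$. Since $V$ is positive definite, radially unbounded and quadratic, with sandwich bounds using $\lambda_{\min},\lambda_{\max}$ of $P_{\scalebox{0.5}{\textit{N}}}$ and of $\Lambda/\gamma$, the standard UUB theorem (Khalil, Thm.~4.18) applies. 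It gives uniform ultimate boundedness of $(e,\tilde W_{\scalebox{0.5}{\textit{N}}})$, with ultimate bound estimated by $d/\min(c_1,c_2)$ as in the statement; I would state this in the same measure as the statement rather than tracking the $\lambda_{\max}/\lambda_{\min}$ conversion factors.

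The main obstacle is that the bound $\|\varepsilon_{\scalebox{0.5}{\textit{N}}}\|\le\bar\varepsilon$ holds only on the compact set $\mathcal{X}_{\scalebox{0.5}{\textit{N}}}$, so the argument is only semiglobal. I would handle this with a forward-invariance argument. The initial conditions and $\bar\varepsilon$ are chosen so that the sublevel set $\{V\le c\}$ containing $\Omega$ maps, through $x_{\scalebox{0.5}{\textit{N}}}=e+x_{\scalebox{0.5}{\textit{N}}_{\textit{r}}}$ with $x_{\scalebox{0.5}{\textit{N}}_{\textit{r}}}$ bounded, into $\mathcal{X}_{\scalebox{0.5}{\textit{N}}}$. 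Then trajectories starting in that set never leave $\mathcal{X}_{\scalebox{0.5}{\textit{N}}}$ and the estimate is self-consistent. This is where the assumption $\mathcal{X}_{\scalebox{0.5}{\textit{N}}}\simeq\mathcal{X}$ from the preceding remark is used.
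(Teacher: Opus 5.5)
Your proposal is correct. It shares the paper's core: the function \eqref{eq:Lya}, the trace cancellation of the $\tilde W_N^{\top}\Theta$ cross term, Young's inequality on the $\bar\varepsilon$ and leakage terms, and boundedness of $\hat x$ inherited from the learning stage. Where you differ is in how you close the argument.

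The paper goes directly from \eqref{eq:DLya} to $\dot V\le -cV+d$ with $c=\min(c_1,c_2)$ and asserts the bound ``regardless of the initial conditions''. Only afterwards does it invoke Proposition~1 of \cite{BecAco24} to get $\hat x$, and hence $x_N$ and $x_{N_r}$, bounded. You instead secure $\hat x$ and $x_{N_r}$ first, show $\dot V<0$ outside $\Omega$, and apply the non-autonomous UUB theorem. You then add a first-exit argument keeping $x_N=e+x_{N_r}$ in $\mathcal{X}_N$.

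That invariance step supplies something the paper leaves implicit. The bound $\|\varepsilon_N\|_2\le\bar\varepsilon$ is only available on $\mathcal{X}_N$, so the paper's use of $\varepsilon_N\in\mathcal{L}_\infty$ along trajectories tacitly presupposes $x_N(t)\in\mathcal{X}_N$ for all $t$. You establish this rather than assume it, at the price of a semiglobal statement instead of an ``any initial condition'' one.

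You are also right that $\hat x(t)$ still enters the error dynamics, through $u_n$ in $\Theta$ and through $x_N=e+x_{N_r}(t)$. So the error dynamics are not autonomous as the paper says, though this is harmless for both routes.

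The comparison-lemma route gives an explicit exponential rate and a bound on $V$ in one line. Its $c=\min(c_1,c_2)$, however, tacitly absorbs the factors $\lambda_{\max}(P_N)$ and $\gamma$ needed to pass from $\|e\|^2$ and $\|\tilde W_N\|_\Lambda^2$ to $V$. In both routes, $d/\min(c_1,c_2)$ becomes an ultimate bound on $(e,\tilde W_N)$ only after the $\lambda_{\min}/\lambda_{\max}$ conversion you chose not to track.

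Minor slips:
\begin{itemize}
\item With the disturbance $\eta$ in \eqref{eq:Esys}, $\hat x$ is only ultimately bounded, not convergent; boundedness is all you use.
\item The Young condition should read $\lambda_{\min}(R_N)\ge 2c_1$.
\item Your $c_2=\sigma_N/\gamma$ reproduces the second term of $d$ exactly. Your $c_1=\lambda_{\min}(R_N)/2$, however, leaves a factor $1/c_1$ on the first term, as in the paper's own \eqref{eq:DLya}. Matching the stated $d$ exactly requires unit-weight Young's inequality, i.e.\ $c_1=\lambda_{\min}(R_N)-1$ with $\lambda_{\min}(R_N)>1$.
\end{itemize}
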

\begin{proof}
Define the stacked error as $z := \text{col}(e, \tilde{W}_{\scalebox{0.5}{\textit{N}}})$. First, note that while the plant dynamics are non-autonomous, the error dynamics 
\eqref{eq:error} are autonomous because the $\hat{x}(t)$ terms cancel out by 
construction. Furthermore, the Lyapunov function \eqref{eq:Lya} satisfies 
$\underline{\kappa}(\|z\|) \leq V(z) \leq \overline{\kappa}(\|z\|)$ for some 
$\underline{\kappa}, \overline{\kappa} \in \mathcal{K}_{\infty}$. Under the assumption of the proposition and given the derivative estimate in  \eqref{eq:DLya}, the boundedness of $z$ directly follows. 
However, establishing overall system stability requires guaranteeing that the 
state estimate $\hat{x}(t)$ remains bounded. According to Proposition 1 in 
\cite{BecAco24}, the closed-loop learning dynamics \eqref{eq:Esys}--\eqref{eq:uopt} 
guarantee an ultimate bound for $\hat{x}(t)$, even in the presence of noisy data. 
Consequently, the simultaneous boundedness of $e$ and $\hat{x}(t)$ implies that 
both $x_{\scalebox{0.5}{\textit{N}}}$ and $x_{\scalebox{0.5}{\textit{N}}_{\textit{r}}}$ are bounded. Additionally, since $W_N \in \mathcal{L}_{\infty}$ 
by assumption, the boundedness of $\tilde{W}_{\scalebox{0.5}{\textit{N}}}$ ensures that $\hat{W}_{\scalebox{0.5}{\textit{N}}}$ is also bounded.

Finally, it is straightforward to see from \eqref{eq:DLya} that there exist constants $c := \min(c_1,c_2)$ and $d$ such that $\dot V \leq -c V + d$, regardless of the initial conditions, and therefore the ultimate bound result also follows.
\end{proof}

The algorithm \ref{alg} summarizes the neuroadaptive approach. This contributes to a simple and efficient way to estimate the non-linear trajectories and to ensure the stability after large perturbations. Choosing as the reference system the trajectories coming from the optimal parameters $\bm{\theta}$.

\begin{algorithm}
	\caption{Neuroadaptive Controller for DS-based IL}    
	\begin{algorithmic}[1]  
        \Procedure {DS-basedIL}{$\mathcal{D}$, $\rho_0$, $K$, $L$, $M$, $N$} \Comment{Details in \citep{BECERRAMORA2025}}
        \State Initialize the parameters $\bm{\theta}_k$, $P_l$ and $\mu_l$
        \State Calculate the estimated system $\dot{\hat{x}}$ by using \eqref{eq:fest} and \eqref{eq:uopt}         
	    \State Learn the optimal parameters from \eqref{eq:J} 
        \EndProcedure       
        \Statex
        \Procedure {NeuroadaptiveApproach}{$\gamma$, $\Lambda$, $\alpha_1$, $\alpha_2$, $\sigma_{\scalebox{0.5}{\textit{N}}}$}
        \State Construct the compact set $\mathcal{X}_{\scalebox{0.5}{\textit{N}}}:=\{x{\scalebox{0.5}{\textit{N}}}, \dot{x}{\scalebox{0.5}{\textit{N}}}\}$
        \State Define the number of RBF in $\Theta$
        \State Calculate the reference system $\dot {x}_{ \scalebox{0.5} {\textit{N}}_{\textit{r}} }$ by using \eqref{eq:sysAr} and \eqref{eq:Esys}
        \State Compute the control law $u_{\scalebox{0.5}{\textit{N}}}$ from \eqref{eq:uN}, \eqref{eq:Esys} and \eqref{eq:uncert_dyn}
        \State Calculate the corrected system $\dot {x}_{\scalebox{0.5}{\textit{N}}}$ by using \eqref{eq:sysA}
        \EndProcedure
	\end{algorithmic} 	    
    \label{alg}
\end{algorithm}

\section{Validation with realistic simulator}

The DS-based IL method presented in this study was validated in the Marine System Simulator (MSS) toolbox \citep{MSS} with a numerical model of a container ship proposed in \cite{Nomoto}. The code and datasets used for this study are freely available at \cite{DSbasedIL}. This model is widely-used as a benchmark for testing the efficacy of control algorithms in marine engineering. A laptop equipped with an Intel\textsuperscript{\textregistered} Core i9-13900H 2.6 GHz CPU and 32 Gb RAM was used to run these simulations. The main features of the container ship are listed in Table~\ref{tb:ShipFeatures}. 

\begin{table}[ht]
\begin{minipage}[b]{0.4\linewidth}
\centering
\begin{tabular}{l c c c} 
    \toprule
    {\textbf{Dimensional parameters}}          & {\textbf{Values}} \\                                                         \midrule                        
    Length             & 175.00 \si{\meter}   \\
    Beam          & 25.40 \si{\meter}\\
    Displaced volume       & 21.22 \si{\cubic\meter}\\
    Fore draft       & 8.00 \si{\meter}\\
    Aft draft       & 9.00 \si{\meter}\\
    Aspect ratio       & 1.82 \\
    Rudder area               & 33.04 \si{\meter\squared}\\
    Propeller diameter              & 6.53 \si{\meter}\\
    \bottomrule
\end{tabular}
\caption{Main features of the container ship model \citep{fossen_guidance}.}
\label{tb:ShipFeatures}
\end{minipage}\hfill
\begin{minipage}[b]{0.4\linewidth}
\centering
\includegraphics[width=1.1\columnwidth]{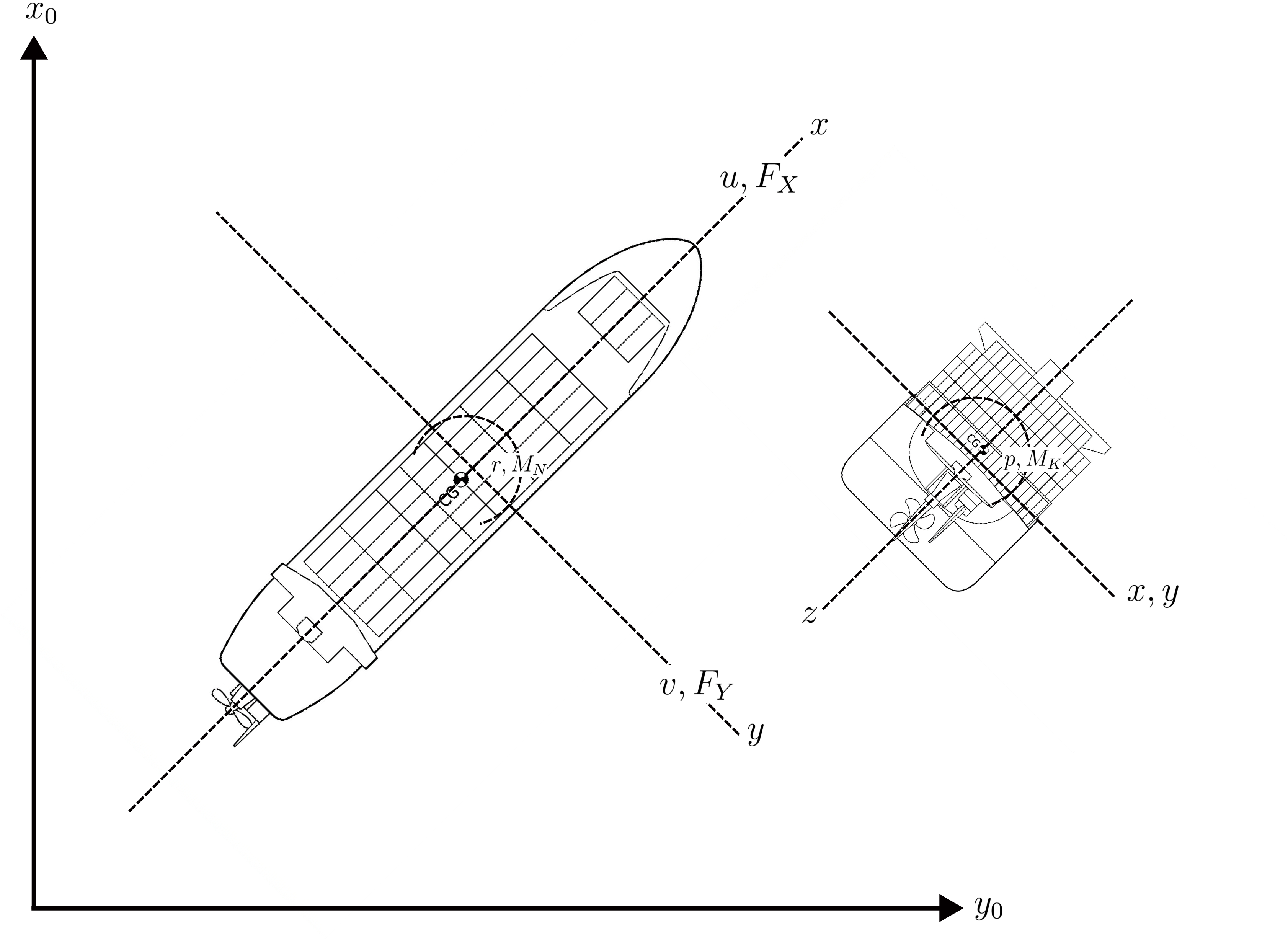}
\captionof{figure}{Coordinate system for the container ship.}
\label{fig:coorSys}
\end{minipage}
\end{table}

The nonlinear equations of motion for the container ship are represented by the surge, sway, and pitch forces, as well as the roll, yaw, and heave moments; nevertheless, the effect of pitch and heave is neglected to obtain a simplified model of the hydrodynamic forces $F$ and moments $M$ (see Fig. \ref{fig:coorSys}). These are described as follows:

\begin{equation} \label{eq:ForMom}
\begin{split}
F_X &= (m + m_x)\dot{u} - (m + m_y)vr \\
F_Y &= (m + m_y)\dot{v} + (m + m_x)ur + m_y \alpha_y \dot{r} - m_y l_y \dot{p} \\
M_K &= (I_x + J_x)\dot{p} - m_y l_y \dot{v} - m_x l_x ur + WGMp \\
M_N &= (I_z + J_z)\dot{r} + m_y \alpha_y \dot{v} + F_Y x_G
\end{split}
\end{equation}

\noindent where $F_X$ and $F_Y$ denote the hydrodynamic surge and sway forces, respectively. $m$ is the mass of the ship, $m_x$ and $m_y$ are the added mass in the $x$ and $y$ direction. $\alpha_y$ denotes the $x$-coordinates of the center of $m_y$, and $l_y$ and $l_x$ the $z$-coordinates of the center of $m_y$ and $m_x$, respectively. $M_K$ is the hydrodynamic roll moment about the center of gravity (CG) and $M_N$ is the hydrodynamic yaw moment about the midship. $I_x$, $I_z$, $J_x$ and $J_z$ denote the moment of inertia and the added moment of inertia about the $x$ and $z$ axes, respectively. $x_G$ is the distance from the CG, $W$ is the weight of the ship, and $GM$ is the metacentric height. A complete analysis of this system is detailed in \cite{Nomoto}.

As mentioned above, the DS-based IL is composed of two steps: a learning process to mimic the human experience and a control strategy to ensure stability. Experiments are developed in the following sections.

\subsection{Learning complex trajectories}

The MSS was initially used to construct sets of complex trajectories using the nonlinear model of the container ship. Three different sets of non-linear trajectories were simulated. Each set likewise contains 3 similar trajectories or demonstrations $M$ (i.e. positions are geographically close) with one common target (e.g. origin of the coordinate reference system). The simulated trajectories cover distances between 2.59 $\si{\nauticalmile}$ and 3.02 $\si{\nauticalmile}$. 

The position and velocity of the ship data are recorded to be used in the learning process. Since the recorded data-points $N$ are massive ($5000$), a pre-processing stage is required to reduce the learning time while preserving their integrity. The number of Gaussian Kernels $K$ and the number of asymmetric quadratic functions $l$ are 5 and 2, respectively. 

The parameter $\bm{\theta_k}$ is initialized with the k-means algorithm and optimized with the EM algorithm to produce an estimated velocity $\hat{f}(x; \bm{\theta})$ (see Eq. \eqref{eq:fest}). The parameters $P_l$ and $\mu_l$ are initialized with $\begin{bmatrix} 1 & 0\\ 0 & 1 \end{bmatrix}$ and $[0,0]^{\top}$, respectively. Additionally, the parameter $\rho_0$ is tuned to $\num{3e-9}$. The parameters $\bm{\theta}$ are optimized by Eq. \eqref{eq:J}. Once these optimal parameters $\bm{\theta}$ defined in Section 2 are learned, the estimated trajectories can be reproduced from any geographical point relatively close to the demonstrations. 

\begin{figure}[hbtp]  
    \centering
    \includegraphics[width=1\columnwidth]{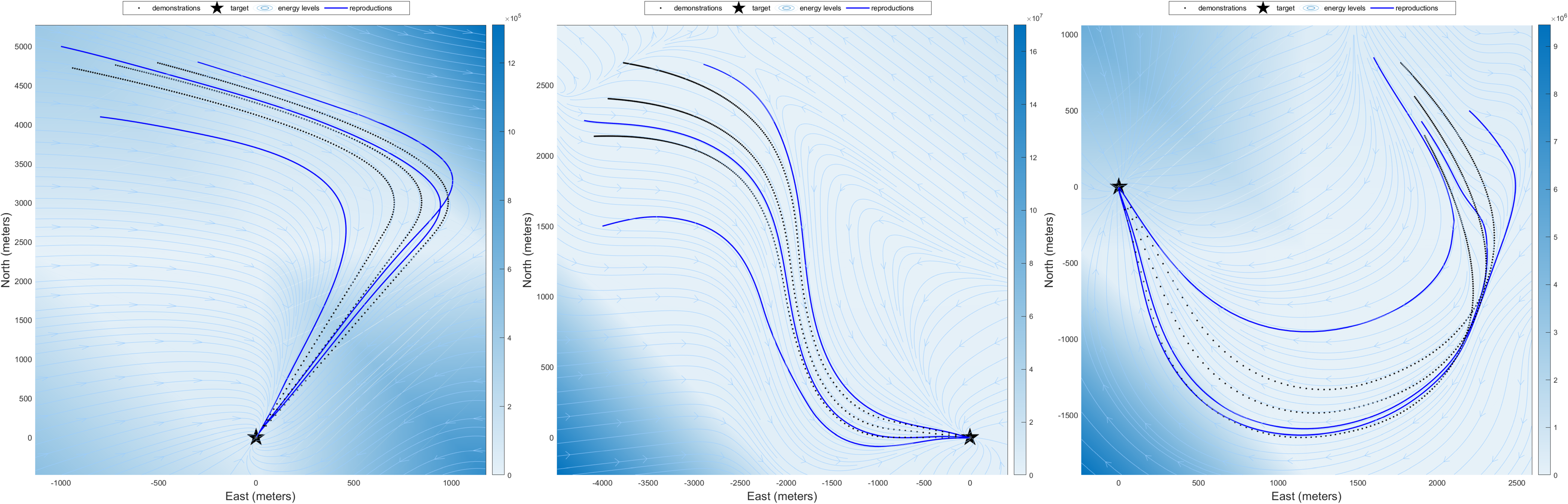}
    \caption{Demonstrated trajectories (black dots) and estimated trajectories (blue lines).}
    \label{fig:learningRef}
\end{figure}

In this context, DS-based IL relies on yielding estimates from a set of demonstrations, thus generalizing the learning of the trajectory. Therefore, if the ship changes its position with respect to recorded data, no human intervention is necessary to re-planning. Demonstrations and estimates are illustrated in Fig. \ref{fig:learningRef}. The black dots, the dark blue lines, and the light blue lines represent the demonstrations performed by a human-expert, the estimates of such demonstrations, and the streamlines, respectively. The background color depicts the rate of change of the energy-like function: the lighter the color, the higher the dissipation. Note that the estimates start from random positions and converge to the unique target; furthermore, their trajectory's shape resembles the demonstrations shape (i.e. generalization capacity).

A path planning algorithm requires a set of way-points to follow a trajectory and subsequently reach a target. These way-points can be recorded within the ship to accomplish a maritime route; however, every time the ship is positioned in a different geographical point, this will have to re-planning the trajectory or will have to navigate to the recorded trajectory, which is inefficient. Hence, a DS-based IL methodology is suitable to avoid such drawbacks. 

\begin{figure}[hbtp]    
    \centering
    \begin{subfigure}[ht]{0.25\textwidth}        
        \centering
        \includegraphics[width=1\columnwidth]{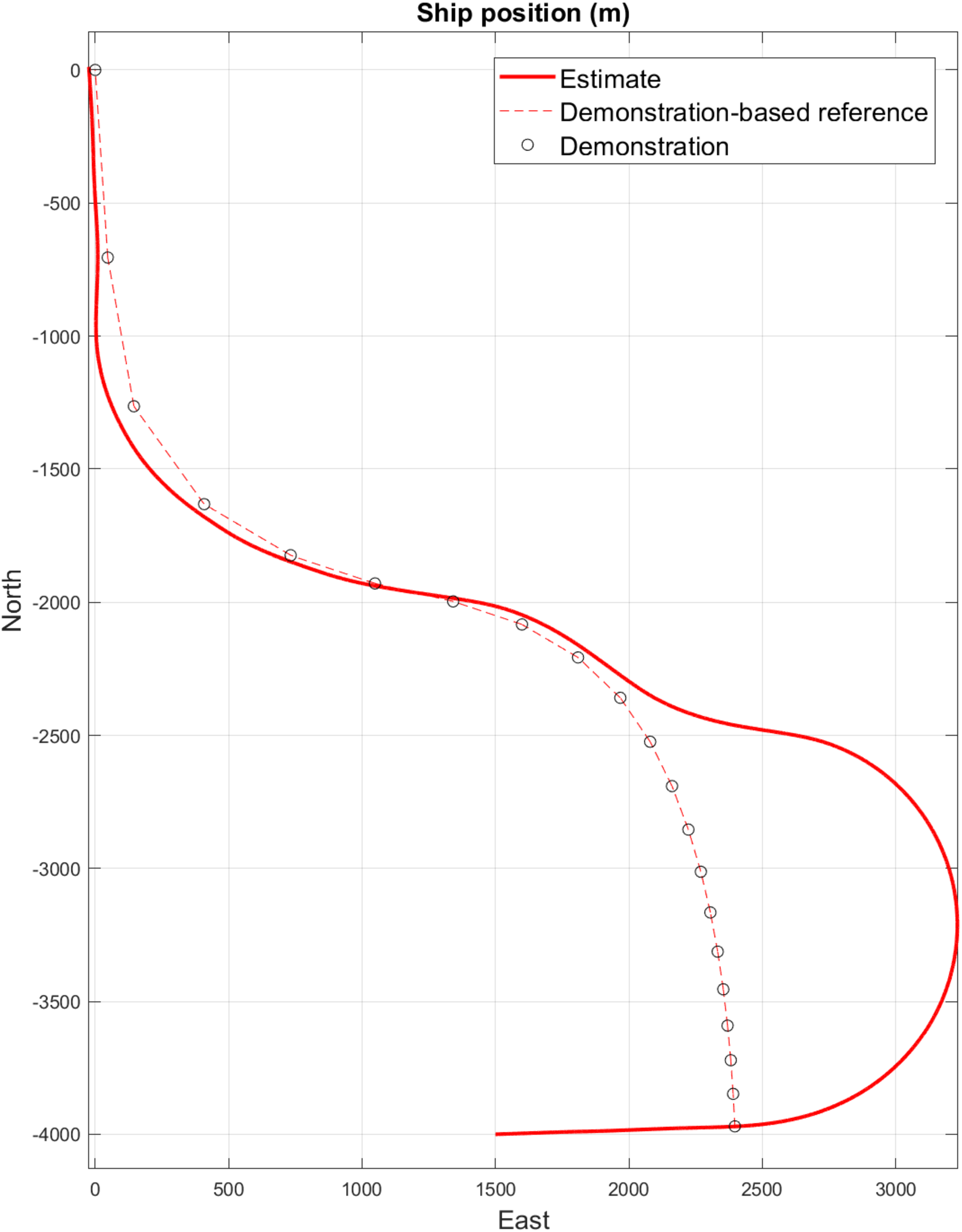}    
        \caption{}
        \label{fig:RefA}
    \end{subfigure}
    \begin{subfigure}[ht]{0.25\textwidth}        
        \centering
        \includegraphics[width=1\columnwidth]{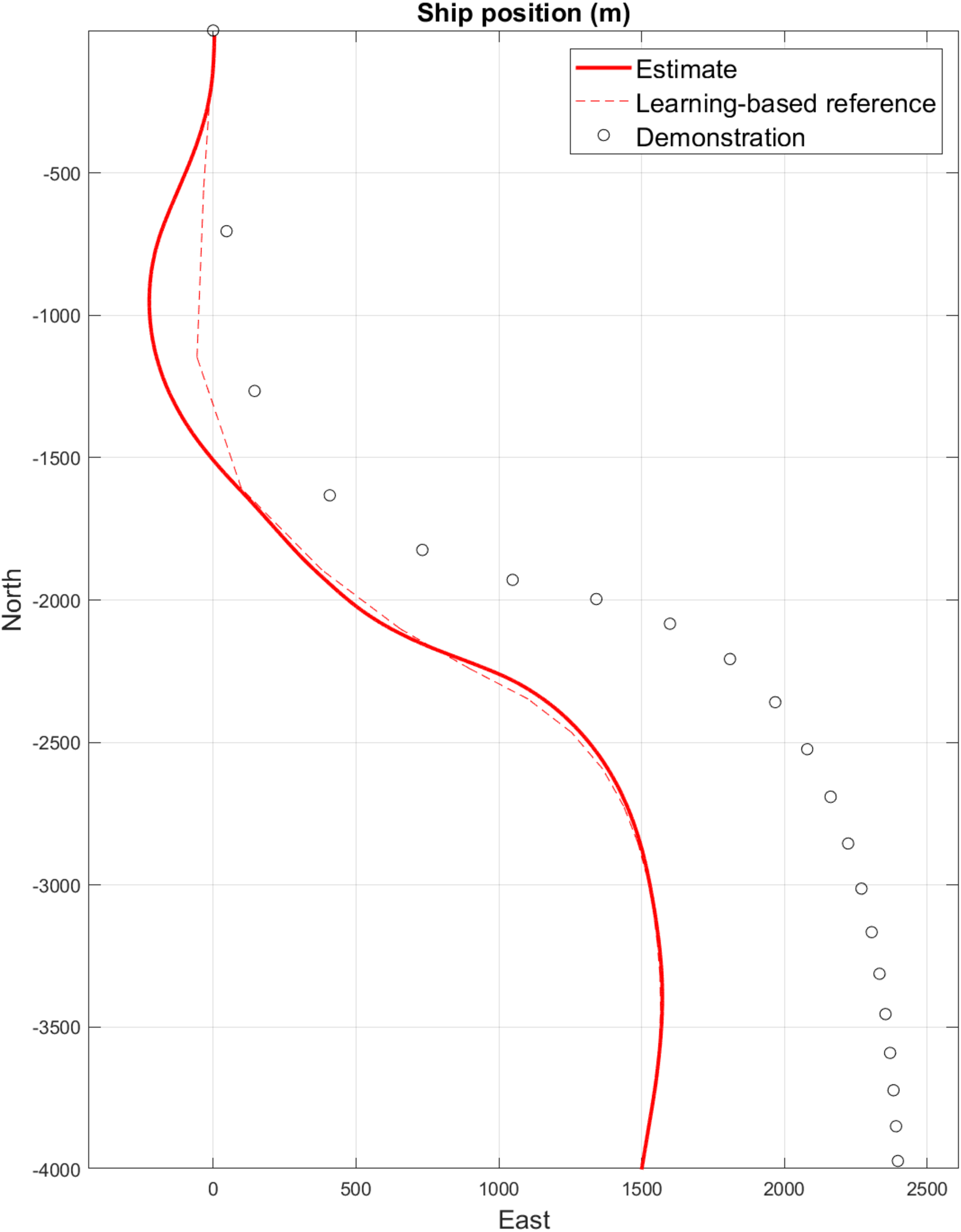}
        \caption{}
        \label{fig:RefB}
    \end{subfigure}
\caption{Comparison of references. Demonstration-based reference (a) and learning-based reference (b).}
\label{fig:References}
\end{figure}

A comparison between two types of reference is shown in Fig. \ref{fig:References}. The initial condition for the container ship in both scenarios is exactly the same, as well as their targets are very similar. However, two downsides can be observed in Fig. \ref{fig:RefA}: (i) the ship must first reach the demonstration to execute the trajectory and (ii) the ship requires more effort to follow the reference. Unlike these downsides, the ship in Fig. \ref{fig:RefB} follows the learned trajectory smoothly.  

\subsection{Trajectory tracking via heading control}

The model of the container ship is capable of following a trajectory autonomously if a collection of way-points is provided and a control strategy is employed to regulate its heading. The way-points are drawn from the learning-based reference presented in the previous section. 

\begin{figure}[h!]  
    \centering
    \includegraphics[width=1\columnwidth]{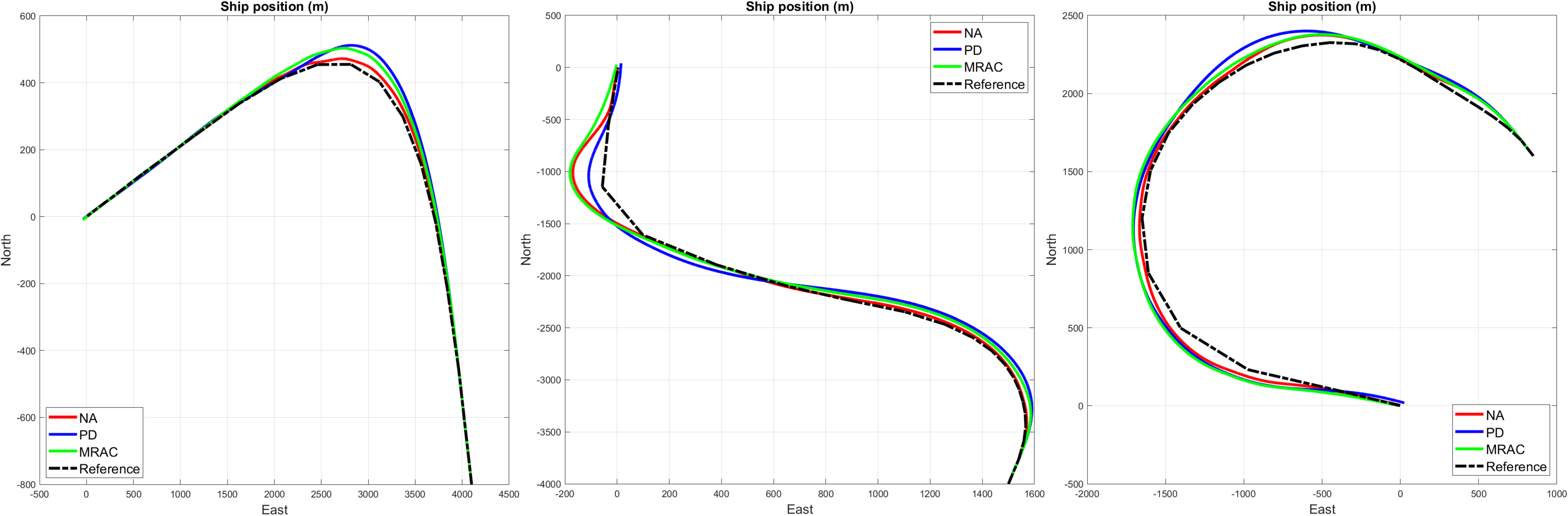}
    \caption{Heading tracking.}
    \label{fig:Tracking}
\end{figure}

The initial conditions for the surge, sway, and yaw velocity are 8 m/s, 0 m/s, and 0 deg/s, respectively. The shaft velocity is 70 rpm (constant) and the angle of the rudder (rad) is used as the control input. The control algorithm to guarantee the stability in the ship is Neuroadaptive control (NA). However, a Model Reference Adaptive Control (MRAC) and Proportional-Derivative control (PD) are used for comparison. 

\begin{table}[h!]
\centering
\footnotesize
\caption{Trajectory tracking error with ideal conditions (in Nautical miles)}
\label{tb:SEA}
\begin{tabular}{l c c c} 
    \toprule
    {\textbf{Type of trajectory}}          &\multicolumn{3}{c}{\textbf{Controllers}} \\
    \cmidrule(lr){2-4}
                           & \text{NA}  & \text{MRAC}  & \text{PD} \\
    \midrule                        
    A          & 0.0111 & 0.0289 & 0.0255 \\
    B          & 0.0368 & 0.0561 & 0.0575 \\
    C          & 0.0490 & 0.0913 & 0.0837 \\    
    \bottomrule
\end{tabular}
\end{table}

The fundamental parameters of the NA control are defined as follows: The number of neurons $\Theta$ was set to 18, the adaptive gain $\gamma$ was set to 50, the control effectiveness uncertainty $\Lambda$ was set to 1, the leakage term $\sigma_{\scalebox{0.5}{\textit{N}}}$ was set to 0.1, the nominal gains $\alpha_1$ and $\alpha_2$ were set to $[3.16,4.04]$ and 3.16, respectively, the solution of the Lyapunov equation $P{\scalebox{0.5}{\textit{N}}}$ was $\begin{bmatrix} 1.15 & 0.16\\ 0.16 & 0.16 \end{bmatrix}$ and the state and input matrices ($A_{\scalebox{0.5}{\textit{N}}}$, $B_{\scalebox{0.5}{\textit{N}}}$) were set to $\begin{bmatrix} 0 & 1\\ 0 & 0 \end{bmatrix}$ and $[0,1]^{\top}$, respectively. In the other hand, the MRAC utilized the first-order Nomoto model as its reference model (see Appendix A for a more detailed explanation).

The trajectory tracking is depicted in Fig. \ref{fig:Tracking}. The red line, blue line, and green line represent the trajectory followed by the ship using NA control, MRAC and PD control, respectively. The black dashed line is the learning-based reference. The similarity among the different control strategies is qualitatively evident. For a more rigorous assessment, the Swept Error Area (SAE) metric \citep{Khansari2014} is utilized to measure the inaccuracy between the reference and the estimated trajectory of each control algorithm. A quantitative comparison is detailed in Table~\ref{tb:SEA}. 

\begin{figure}[h!]  
    \centering
    \includegraphics[width=1\columnwidth]{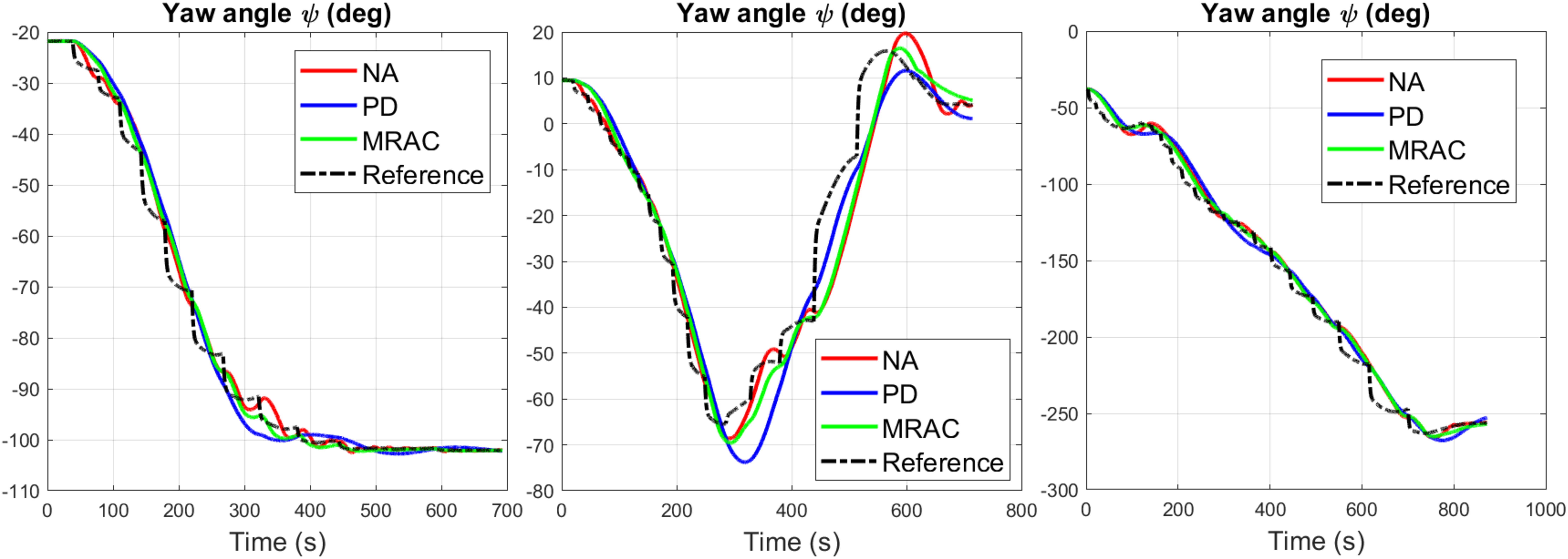}
    \caption{Heading tracking.}
    \label{fig:Heading}
\end{figure}

SEA results are given in squared nautical miles ($\si{\nauticalmile\squared}$); the smaller the area, the larger the precision of the estimate. Note that the NA approach provides the best results among controllers. The heading tracking for every trajectory is shown in Fig. \ref{fig:Heading}. The red line, blue line, and green line represent the heading performed by the ship using NA control, MRAC and PD control, respectively. The black dashed line is the desired heading. Note that the desired heading located in the middle of Fig. \ref{fig:Heading} is more difficult to track due to the quick changes in the direction of the ship.

\subsection{Trajectory tracking with disturbances}

IL is  not sufficient to deal with maritime disturbances due to the inability to ensure system stability; hence, a more sophisticated methodology is required as DS-based IL. The robustness of this methodology is validated by adding Gaussian noise to the system measures. A comparison among NA, MRAC and PD to follow a complex trajectory with noise is presented in Fig \ref{fig:TrackingN}. 

\begin{figure}[hbtp]  
    \centering
    \includegraphics[width=1\columnwidth]{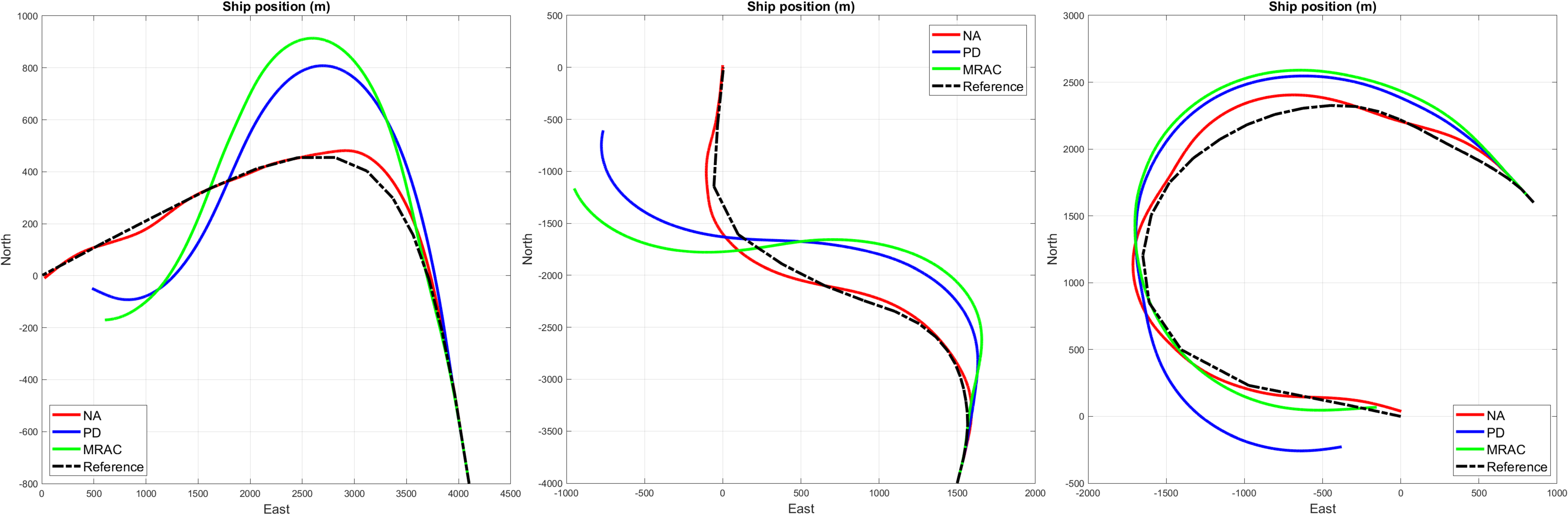}
    \caption{Heading tracking with noisy observations.}
    \label{fig:TrackingN}
\end{figure}

The DS-based IL with a NA approach considerably outperforms PD and MRAC approaches to follow a non-linear trajectory with maritime disturbances. Note that the red line (NA approach) is the closest one to the learning-based reference despite the induced noise. Additionally, heading tracking for every trajectory corroborates this behavior (see Fig. \ref{fig:HeadingN}), as well as SEA results presented in Table~\ref{tb:SEA_N}.

\begin{figure}[hbtp]  
    \centering
    \includegraphics[width=1\columnwidth]{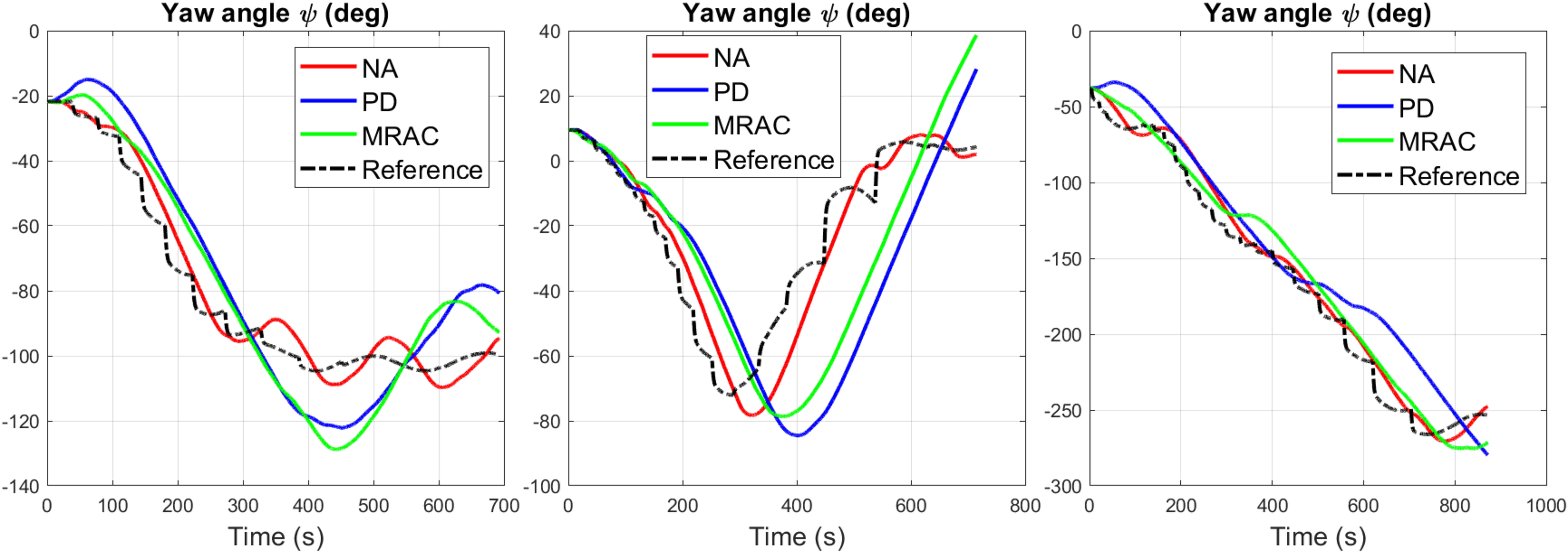}
    \caption{Heading tracking with noisy observations.}
    \label{fig:HeadingN}
\end{figure}

Although different control strategies for a DS-based IL methodology present similar results as can be observed in the previous section, the NA approach offers a reasonable advantage over the other two approaches (MRAC and PD). Furthermore, the trajectory (with noise) tracking error is much larger in the MRAC and PD approaches than in the NA approach (see Table~\ref{tb:SEA_N}).

\begin{table}[ht!]
\centering
\footnotesize
\caption{Trajectory tracking error with noisy observations (in Nautical miles)}
\label{tb:SEA_N}
\begin{tabular}{l c c c} 
    \toprule
    {\textbf{Type of trajectory}}          &\multicolumn{3}{c}{\textbf{Controllers}} \\
    \cmidrule(lr){2-4}
                           & \text{NA}  & \text{MRAC}  & \text{PD} \\
    \midrule                        
    A          & 0.0231 & 0.1561 & 0.1839 \\
    B          & 0.0530 & 0.4268 & 0.4386 \\
    C          & 0.0869 & 0.2552 & 0.2276 \\    
    \bottomrule
\end{tabular}
\end{table}

\section{Conclusions}
This study presented a novel Dynamical System-based Imitation Learning (DS-based IL) framework integrated with neuroadaptive control to enable dynamic, high-fidelity trajectory tracking for autonomous ships under severe marine disturbances. The proposed architecture addresses the critical trade-off between global convergence and localized path fidelity through a two-phase design: (i) generalizing human expert maneuvers into dynamic, state-dependent reference trajectories, and (ii) synthesizing an auxiliary neuroadaptive control action that enforces trajectory recovery while strictly preserving closed-loop stability. Extensive simulations across diverse initial conditions and three distinct trajectory scenarios confirmed the framework's superior generalization compared to static demonstration-based references. Furthermore, comparative evaluations demonstrated that the neuroadaptive scheme yields superior tracking fidelity over standard control baselines in both low- and high-disturbance regimes.

While the proposed architecture exhibits strong robustness to system uncertainties, achieving optimal closed-loop performance requires systematic tuning of key parameters, including neuron density, adaptive gains, leakage terms, and hyperparameters. In practical deployments, this framework mitigates pilot workload during repetitive maneuvering tasks by capturing implicit human compliance and reactivity. Consequently, it provides a viable baseline for advanced decision-support systems on conventional vessels or primary guidance-and-control architectures for Unmanned Surface Vehicles (USVs).

Future work will focus on experimental deployment on a physical USV platform to evaluate real-world hydrodynamic effects. Additionally, an online obstacle avoidance layer will be integrated into the dynamic reference generator to handle dynamic threats and complex marine traffic scenarios.

\section*{Acknowledgements}
This publication is part of the Project PUSHME-TUGS Ref. PID2024-158023OB-C21 which is funded by MICIU/AEI/10.13039/501100011033 and by ERDF/EU. This work was also supported by the Emerging Research Group Multi-Robot and Control Systems \href{https://investigacion.us.es/sisius/sis_depgrupos.php?ct=&cs=&seltext=TEP-995&selfield=CodPAI}{(\textbf{MACS})} and by the National Program for Doctoral Formation (Minciencias-Colombia, 885-2020). 

\appendix
\section{Nomoto model for MRAC}

The first-order Nomoto model is used as a reference model for MRAC. This is described as
\begin{equation*}
\dot{r} + \frac{1}{T}r = \frac{G}{T} u_{\delta},
\end{equation*}

\noindent where $r$ is the derivative of the heading $(\psi)$, $T$ is the time constant, $G$ is the control gain, and $u_{\delta}$ is the rudder. Hence, assuming that the parameters $T$ and $G$ are unknown, the system can be in the form \eqref{eq:sysA} as

\begin{equation*}
\begin{bmatrix} \dot{x}_1\\ \dot{x}_2 \end{bmatrix} = \begin{bmatrix} 0 & 1\\ 0 & 0 \end{bmatrix} \begin{bmatrix} {x}_1\\ {x}_2 \end{bmatrix} + \begin{bmatrix} 0\\ 1 \end{bmatrix} \frac{G}{T} \left[ u - \frac{1}{G} {x}_2 \right],
\end{equation*}

\noindent where $x_1 = \psi$, $x_2 = r$ and $u=u_\delta$. Note that the state and input matrices are exactly the same as those used in neuroadaptive approach; nevertheless, uncertainty was structured according to the Nomoto model.

\bibliographystyle{plainnat}
\bibliography{ref.bib}

\end{document}